\title{Automated Benchmarking of Incremental \\ SAT and QBF
  Solvers\thanks{This work was supported by the Austrian Science Fund (FWF)
    under grant S11409-N23. This article will appear in the
      \textbf{proceedings} of the \emph{20th International Conference on
Logic for Programming, Artificial Intelligence and Reasoning (LPAR)}, LNCS,
Springer, 2015.}}
\author{Uwe Egly \and Florian Lonsing \and Johannes Oetsch}
\institute{
Vienna University of Technology,\\
Institute of Information Systems 184/3,\\
Favoritenstra\ss{}e\ 9-11, A-1040 Vienna, Austria \\
\email{\{uwe,lonsing,oetsch\}@kr.tuwien.ac.at}
}
\newcommand{\depqbf}{\ensuremath{\mathtt{DepQBF}}\xspace}
\newcommand{\qube}{\ensuremath{\mathtt{QuBE}}\xspace}
\newcommand{\minisat}{\ensuremath{\mathtt{MiniSAT}}\xspace}
\newcommand{\lingeling}{\ensuremath{\mathtt{Lingeling}}\xspace}
\newcommand{\picosat}{\ensuremath{\mathtt{PicoSAT}}\xspace}
\newcommand{\aigbmc}{\ensuremath{\mathtt{aigbmc}}\xspace}
\newcommand{\ipasir}{IPASIR\xspace}
\newcommand{\quant}{\mathit{quant}}
\newcommand{\push}{\mathtt{push}}
\newcommand{\pop}{\mathtt{pop}}
\newcommand{\add}{\mathtt{add}}
\begin{document}

\maketitle

\begin{abstract}
Incremental SAT and QBF solving potentially yields improvements when sequences
of related formulas are solved. An incremental application is usually tailored towards some specific
solver and decomposes a problem into incremental solver calls.
 This hinders the independent
comparison of different solvers, particularly when the application
program is not available. As a remedy, we present an approach to
automated benchmarking of incremental SAT and QBF solvers. Given a collection
of formulas in (Q)DIMACS format generated incrementally by an application
program, our approach automatically translates the formulas into instructions
to import and solve a formula by an incremental SAT/QBF solver.  The result of
the translation is a program which replays the incremental solver calls and thus
allows to evaluate incremental solvers independently
from the application program. We
illustrate our approach by different hardware verification problems for SAT
and QBF solvers.
\end{abstract}

\section{Introduction}{\label{sec:intro}}

Incremental solving has contributed to the success of SAT technology and
potentially yields considerable improvements in applications where sequences
of related formulas are solved.
The logic of quantified Boolean formulas (QBF) extends propositional logic
(SAT) by explicit existential and universal quantification of
variables and lends itself for problems within PSPACE. 
Also for QBFs, incremental solving has been successfully applied in different domains~\cite{DBLP:conf/fmcad/BloemEKKL14,DBLP:conf/aisc/EglyKLP14,DBLP:conf/date/MarinMLB12,DBLP:journals/aicom/MillerMB15}.

The development of SAT and QBF solvers has been driven by competitive events
like the SAT Competitions,  
QBF Evaluations (QBFEVAL), or the QBF Galleries. 
 These events regularly result in publicly available benchmarks submitted by the
participants which help to push the state of the art in SAT and QBF solving.
In the past,  
the focus was on \emph{non-incremental} SAT solving, and
 the evaluation of \emph{incremental} solvers does not
readily benefit from competitions and available benchmark collections. 

Benchmarking incremental solvers requires to solve a sequence of related
formulas. To this end, the formulas must be incrementally imported to the
solver and solved by means of API calls. The API calls are typically generated
by an application program, like a model
checker or a 
 formal verification 
or planning tool, for example, 
which tackles
a problem by encoding it incrementally to a sequence of formulas. In order to
compare different incremental solvers on that sequence of formulas, the
solvers must be tightly coupled with the application program by linking them
as a library. 
Hence benchmarking of incremental solvers relies on the application program
used to generate the sequence of formulas which, however, often is not 
available. Even if the application program is available, it has to
be adapted to support different solvers, where each solver might come
with its own API. Further, the same sequence of formulas must be generated
multiple times by the application program to compare different solvers.

To remedy this situation, we present an approach to automated benchmarking of
incremental SAT and QBF solvers which decouples incremental SAT/QBF
solving from incremental generation of formulas using an application
program. This is achieved by translating a sequence of related
CNFs and QBFs in prenex CNF (PCNF) into API calls of incremental solvers. 
Such a sequence might be the output of an application program or it was taken from existing
benchmark collections.
The formulas 
are then syntactically analyzed and  
instructions to
incrementally import and solve them  
are generated.
For CNFs, the instructions are  function calls in the IPASIR
API, which has been proposed for the Incremental Library Track of the SAT Race
2015.\footnote{\url{http://baldur.iti.kit.edu/sat-race-2015/}} For PCNFs, the
instructions correspond to calls of the API of the QBF solver
\depqbf,\footnote{\url{http://lonsing.github.io/depqbf/}} which generalizes IPASIR and allows
to update quantifier prefixes. 
The result of translating a sequence 
of formulas to solver API calls is a \emph{standalone benchmarking program}
which replays the incremental solver calls. 
Any incremental SAT/QBF solver supporting the
IPASIR API or its QBF extension as implemented in \depqbf can be 
integrated by simply linking it to the program. This allows to 
compare different solvers independently from an application.

In some applications, the sequence of formulas depends on the used solver, e.g.,
if truth assignments are used to guide the process. 
Even then, our approach allows to
compare different incremental solvers on the fixed sequences generated with one particular
solver. However, then it is important to note that this comparison is limited to this particular fixed sequence, it
would be unfair to conclude something about the performance of the solvers would they have been
genuinely used within the application.
This problem occurs also in sequences of formulas which are already present in benchmark collections.
For experiments in this paper,  we only considered applications where the sequences of generated formulas do not depend on 
intermediate truth assignments.
 
As 
our approach is also applicable to already generated formulas
that are part of existing benchmark collections,
such collections become available to developers of incremental solvers. 
Furthermore, comparisons between solvers in incremental and non-incremental mode are
made possible. 
In addition, since the input for the benchmarking program describes only
the differences between consecutive formulas, we obtain a quite succinct representation of incremental benchmarks.
Our approach to automated benchmarking of incremental SAT and QBF solvers
underpins the goal of the Incremental Library Track of the SAT Race 2015.
We have generated benchmarks and submitted them to this competition.

\section{Background}{\label{sec:bg}}

We consider propositional formulas in CNF and 
identify a CNF with the set of its clauses. 
A sequence $\sigma= (F_{1}, 
\ldots, F_{n})$ of formulas represents the formulas that are
incrementally generated and solved by an application program. 
A QBF $\psi = P.F$ in prenex CNF (PCNF)  extends a CNF $F$
by a quantifier prefix $P$. 
The prefix $P =
Q_{1}, \ldots, Q_{n}$ of a QBF is a
sequence of pairwise disjoint \emph{quantified sets} $Q_i$.  A {quantified set} $Q$ is a set of
variables with an associated quantifier $\quant(Q) \in \{\exists,
\forall\}$. 
We consider only closed PCNFs. 
For adjacent quantified sets $Q_i$ and $Q_{i+1}$, 
$\quant(Q_i) \not = \quant(Q_{i+1})$. 
Given a prefix $P = Q_{1},
\ldots, Q_{n}$, index $i$ 
is the \emph{nesting level} of $Q_{i}$ in $P$. 

Our automated benchmarking approach   is 
based on \emph{solving under assumptions}~\cite{DBLP:conf/sat/EenS03,DBLP:journals/entcs/EenS03} 
as implemented in modern
SAT~\cite{DBLP:conf/sat/AudemardLS13,DBLP:conf/sat/LagniezB13,DBLP:conf/sat/NadelRS14}
and QBF
solvers~\cite{DBLP:conf/cp/LonsingE14,DBLP:conf/date/MarinMLB12,DBLP:journals/aicom/MillerMB15}.
When
solving a CNF under assumptions, the clauses are augmented with \emph{selector
variables}. Selector variables allow for
temporary variable assignments made by the user via the solver API. If the
value assigned to a selector variable satisfies the clauses where it
occurs, then these clauses are effectively removed from the CNF.
This way, the user controls which clauses
appear in the CNF in the forthcoming incremental solver run.
The IPASIR API proposed for the Incremental Library Track of the SAT Race 2015
consists of a set of functions for adding clauses to a CNF and
handling assumptions. 
A disadvantage of this approach is that the user has to keep track of the used selector variables 
and assumptions manually.

For incremental QBF solving, additional API functions are needed to remove
quantified sets and variables from and add them to a prefix. For 
QBF solvers, we generate
calls in the API of \depqbf
which generalizes IPASIR by functions to manipulate quantifier
prefixes. Additionally, it allows to remove and add clauses in a
stack-based way by push/pop operations
where selector variables and assumptions are handled internal to the solver and hence are invisible to the user~\cite{DBLP:conf/cp/LonsingE14}. 
 For details on the IPASIR and \depqbf interfaces, we refer to the respective webpages mentioned in the introduction.

\section{Translating Related Formulas into Incremental Solver Calls}{\label{sec:workflow}}

We present the workflow to translate a given sequence
$\sigma = (\psi_{1}, \ldots, \psi_{n})$ of related (P)CNFs into a
standalone benchmarking program which calls an integrated solver via its API
to incrementally solve the formulas from $\psi_{1}$ up to $\psi_{n}$:

\begin{compactenum}
\item{
   First, the formulas in $\sigma$ are 
  analyzed and the syntactic differences between each $\psi_{i}$ and $\psi_{i+1}$ are
  identified. 
  This includes clauses and quantified sets that have to be added or removed
  to obtain $\psi_{i+1}$ from $\psi_{i}$. Also, variables may be added to or removed from quantified sets.
For CNFs, 
the prefix
  analysis is omitted.}

\item{
The
  differences between the formulas identified  in the first step are expressed by generic update instructions  
  and are written to a file. 
   A clause set is represented as a stack which can be updated via push and pop operations. 
The update instructions for quantifier prefixes are adding a quantified set at a
nesting level and adding new variables to quantified sets
already present in the prefix.  Unused variables are deleted from the prefix be the solver.
}

\item{
    Files that contain generic update
  instructions are then interpreted by a \emph{benchmarking program} which
translates them into calls of the IPASIR
  API (for CNFs) or QBF solver calls (for PCNFs). For the
  latter, calls of \depqbf's API are generated. 
}
\end{compactenum}

\noindent

\noindent
The {benchmarking program} is standalone and independent from the
application program used to generate $\sigma$. It takes the
files containing the generic update instructions as the only input. 
Multiple solvers may be integrated in the benchmarking
program by linking them as libraries.  
Files containing the update instructions can serve as standardised benchmarks for
incremental SAT and QBF solvers.

\paragraph{Analyzing CNFs.} \label{sec:analyze:cnfs}

The algorithm to analyze sequences $\sigma = (F_{1}, \ldots,
F_{n})$ of clause sets relies on a stack-based representation of
$F_i$
which allows for simple deletion of
clauses that have been added most recently. 
A clause $c$ which appears in some $F_i$  and
is removed later at some point to obtain $F_j$ with $i < j \leq n$ is called
\emph{volatile in} $F_{i}$. A clause which appears in some $F_i$ for
the first time and also appears in every $F_j$ with $i < j \leq n$ and
hence is never deleted is called \emph{cumulative in} $F_{i}$.

The algorithm to analyze sequence $\sigma$ identifies volatile and cumulative clauses in all clause
sets in $\sigma$. 
Cumulative clauses
are pushed first on the stack representing the current clause set
because they are not removed anymore after they have been added. 
Volatile clauses are pushed last because they are
removed at some point by a pop operation when constructing a later
formula in $\sigma$. 
For illustration, consider the following sequence $\sigma = (F_1, \ldots, F_4)$ of clause sets $F_{i} $ along with their respective sets $C_{i}$ of cumulative clauses and sets $V_{i}$ of volatile clauses: 
\begin{center}
\small
\begin{tabular}{l@{\hspace{1cm}}l@{\hspace{1cm}}l}
$F_{1} = \{ c_{1}, c_{2}, v_{1}  \}	$			& $C_{1} = \{  c_{1}, c_{2} \}$	& $V_{1} = \{ v_{1}  \}		$	\\
$F_{2} = \{ c_{1}, c_{2}, c_{3}, v_{1}, v_{2}  \}	$		& $C_{2} = \{c_{3}\}$		& $V_{2} = \{  v_{1}, v_{2} \}$		\\
$F_{3} = \{ c_{1}, c_{2}, c_{3}, c_{4}, v_{1}, v_{3}  \}$		& $C_{3} = \{  c_{4} \}$		& $V_{3} = \{   v_{1}, v_{3} \}$		\\
$F_{4} = \{ c_{1}, c_{2}, c_{3}, c_{4}, c_{5}  \}		$	& $C_{4} = \{  c_{5}\}	$		& $V_{4} = \emptyset	$		\\
\end{tabular}
\end{center}

After the sets of cumulative and volatile clauses have been identified
for each $F_i$, the clause sets 
can be incrementally constructed by means of the following operations on
the clause stack:
adding a set $C$ of clauses permanently to a formula by $\add(C)$,
pushing a set $C$ of clauses on the stack by $\push(C)$, and
popping a set  of clauses from the stack by $\pop()$.
The sequence $\sigma = (F_{1}, \ldots, F_{4})$ from the example above is generated incrementally by executing the following stack operations:
\begin{center}
\small
\begin{tabular}{l@{\hspace{1cm}}l@{\hspace{1cm}}l}
& $\add(C_{1})$ & $\push(V_{1})$ \\
$\pop()$ & $\add(C_{2})$ & $\push(V_{2})$ \\
$\pop()$ & $\add(C_{3})$ & $\push(V_{3})$ \\
$\pop()$ & $\add(C_{4})$ & $\push(V_{4})$ \\
\end{tabular}
\end{center}

Note that the above schema of stack operations  generalises to
\emph{arbitrary} sequences of clause
sets,  i.e., we need 
at most one push, one add,  and one pop operation in each step,
provided that the clauses have been classified as volatile or
cumulative  before.  

The algorithm for identifying cumulative and
volatile clauses in a sequence of clause sets appears as  
Algorithm~\ref{alg:cumvol}.
For SAT solvers supporting the
IPASIR API, stack frames for volatile clauses pushed on the clause stack are implemented by
selector variables.
Our current implementation of the benchmarking program includes
\depqbf as the only incremental QBF solver  which supports push/pop
operations natively via its API~\cite{DBLP:conf/cp/LonsingE14}.
Note that the relevant part of the input that potentially limits scalability of Algorithm~\ref{alg:cumvol} is the number of variables and clauses in the formulas. The number of formulas  is usually relatively low. The operations on clause sets are implemented such that set intersection and difference are in $O(m \cdot \mathit{log}\,m)$, searching an element is in $O(m)$, and adding or deleting elements are in $O(1)$, where $m$ is the maximal number of clauses in any formula.

\begin{algorithm}[t]
\small
\SetKwInOut{Input}{Input}\SetKwInOut{Output}{Output}
\SetKw{Move}{move}
\SetKw{from}{from}
\SetKw{Break}{break}

\Input{Clause sets $F_1$, $F_2$, \ldots, $F_n$   (at least two sets are required)}
\Output{\begin{tabular}[t]{ll}	$C_1$, \ldots, $C_n$ &  (sets of cumulative clauses to be added) \\
			$V_1$, \ldots, $V_{n}$ & (sets of volatile clauses to be pushed or popped) \end{tabular}}

\BlankLine
$V_1 \longleftarrow F_1 \setminus F_2;\quad$
$C_1 \longleftarrow F_1 \setminus V_1$\;

\BlankLine
\For{$i \leftarrow 2$ \KwTo $n-1$}{
	$V_i \longleftarrow F_i \setminus F_{i+1}$\;
	$C_i \longleftarrow (F_i \setminus F_{i-1}) \setminus V_i$\;
	\ForEach{$c \in V_i \cap F_{i-1}$}{ 
		\For{$j \leftarrow 1\ \KwTo\ i-1$}{
			\If{$c \in C_j$}{
				$C_j \longleftarrow C_j \setminus \{c\}$\; 
				\For{$k = j\ \KwTo\ i-1$}{
					$V_k \longleftarrow V_k \cup \{c\}$\;
				}
				\Break\;
			}
		}
	}
}

\BlankLine
$C_n \longleftarrow F_n \setminus F_{n-1};\quad$
$V_n \longleftarrow \emptyset$\;
\BlankLine
\BlankLine

 \caption{Identifying cumulative and volatile clauses.}
\label{alg:cumvol}
\end{algorithm}


\paragraph{Analyzing PCNFs.}

For sequences of QBFs,
additionally the differences between quantifier prefixes must be identified.
Two quantified sets $Q$ and
$Q'$ are \emph{matching} iff $Q \cap Q' \neq \emptyset$.
Prefix $R$ is \emph{update-compatible to}  prefix $S$ iff all of the following conditions hold:
(i) for any quantified set of $R$, there is at most one matching quantified set in $S$;
(ii) if $P$ is a quantified set of $R$ and $Q$ is a matching quantified set in $S$, then $\quant(P) = \quant(Q)$; and
(iii) for any two  quantified sets $P_{1}$ and $P_{2}$ in $S$ with matching
quantified sets $Q_{1}$ and $Q_{2}$ in $R$, respectively, if the nesting level of $P_{1}$ is less than the nesting level $P_{2}$, then 
the nesting level of $Q_{1}$ is less than the nesting level of $Q_{2}$.

The instructions to  update quantifier prefixes are adding a
quantified set at a given nesting level or adding a variable to a
quantified set at a given nesting level. 
Update compatibility between prefixes $R$ and $S$ guarantees that
there is a sequence of instructions to turn $R$ into $S$ after unused
variables and empty quantified sets have been deleted by the QBF solver.
In particular, Condition~(i) guarantees that there is  no ambiguity when mapping quantified sets from the prefixes, (ii) expresses that quantifiers cannot change, and
(iii) states that quantified sets cannot be swapped.
The algorithm to generate update instructions first checks if
two quantifier prefixes $R$ and $S$ are update-compatible.
If this is the case,  then update instructions are 
computed as illustrated by Algorithm~\ref{alg:prefix}. 
\begin{algorithm}[t]
\small
\SetKwInOut{Input}{Input}\SetKwInOut{Output}{Output}
\SetKw{Move}{move}
\SetKw{from}{from}
\SetKw{Break}{break}
\SetKw{Write}{print}

\Input{Prefix $R$ and $S$  ($R$ has to be update-compatible to $S$)}
\Output{Instructions to update $R$ to $S$}
\BlankLine

$n \longleftarrow 0;\quad$ 
$m \longleftarrow 0$\;

\ForEach{quantified set $Q$ in $S$ from left to right}{

	\uIf {$Q$ has a matching quantified set $M$ in $R$} {
	
		$m\longleftarrow n\ + $\ nesting level of $M$ in $R$\;
		\Write{``Add literals $Q \setminus M$ to quantified set at nesting level $m$.''}\;
	
	}\Else{
	
		$n \longleftarrow n + 1$\;
		$m \longleftarrow m + 1$\;
		\Write{``Add quantified set $Q$ at nesting level $m$.''}\;
	
	}

}

 \caption{Generating update instructions for quantifier prefixes.}
\label{alg:prefix}
\end{algorithm}

\section{Case Studies}{\label{sec:benchmarks}}

In this section, we showcase 
our approach using different hardware verification problems for both SAT and QBF 
solvers. 
Benchmark problems consist of sequences of formulas that were either generated by a model-checking tool or that
were taken from existing benchmark collections where the original application is not available.

\paragraph{SAT: Bounded-Model Checking for Hardware Verification.}
We consider  benchmarks used for the single safety property track of the last Hardware Model Checking Competition (HWMCC 2014)\footnote{\url{http://fmv.jku.at/hwmcc14cav/}}. 
Based on the CNFs generated by the BMC-based model checker \aigbmc\footnote{Part of the AIGER package (\url{http://fmv.jku.at/aiger/)}}, we use our tools to generate incremental solver calls and compare different SAT solvers that implement the \ipasir interface. 
We used the SAT solvers 
\minisat~(v.220)~\cite{DBLP:conf/sat/EenS03},
\picosat~(v.961)~\cite{DBLP:journals/jsat/Biere08}, and
\lingeling~(v.ayv)~\cite{conf/sat/Biere14} as well as the QBF solver \depqbf~(v.4) for the considered problems.
All experiments  were performed on an AMD Opteron 6238 at 2.6~GHz under 64-bit Linux with a time limit of 3600 seconds and a memory limit of 7 GB.

\begin{table}[t]

\caption{Summary of different SAT solvers on hardware verification problems. }
\label{tab:results-sat}
\begin{center}\small

\begin{tabular}{|l|c|c|c|c|c|}
\hline
& \#problems &   {\minisat}      &       {\picosat} & {\lingeling} 
                                                                &   {\depqbf}      \\ 
\hline
BMC problems unrolled by 50 steps & 11 & 284 / 7 & 216 / 3 & 276 / 7 & 190 / 1 \\
BMC problems unrolled by 100 steps & 28 & 905 / 14 & 754 / 4 & 872 / 19 & 491 / 2  \\
\hline 
\end{tabular}

\end{center}
\vspace{-2\baselineskip}
\end{table}

Table~\ref{tab:results-sat} summarises the results.  
For each solver and problem class, numbers $m$ / $n$ mean that $m$ formulas in total were solved within the time limit, 
and $n$ is the number of problems where the maximal number of formulas among all other solvers could be solved.
For example, the first line summarises the results for  BMC problems that were unrolled by 50 steps. 
There are 11 problems in this class, thus 550 formulas in total. From these formulas, \minisat could solve 284 formulas, and for 7 out of 11 problems, no other
solver could solve more formulas than \minisat.
Not surprisingly, all SAT solvers  outperform the QBF solver \depqbf but there are few cases where \depqbf can compete. 
\minisat solves most formulas in total while \lingeling dominates on most benchmarks. More detailed experimental results can be found in the appendix.
The average time for our analyzing algorithm was 522 seconds. 
The number of clauses in the original sequences ranged from 
$2.3$ 
to 
$56.3$ million 
with an average of around $19$ million clauses.
The inputs for the benchmarking program that represent only the update instructions comprise only 1.2 million clauses on average which shows
that we obtain a quite compact representation of incremental benchmarks.
We have submitted all problems from Table~\ref{tab:results-sat} to the Incremental Library Track of the SAT Race 2015.

\paragraph{QSAT: Partial Design Problems.}
\begin{wraptable}{I}{0.60\textwidth}
\vspace{-0.75cm}
\caption{
QBF solvers on incomplete  design problems.
}
\label{tab:results-qbf}
\begin{center}
\small
\begin{tabular}{|l|c|rr|rrr|}
\hline
\multirow{2}{*}{Benchmark} & \multirow{2}{*}{$k$} & \multicolumn{2}{|c|}{\scriptsize{non-incremental}} & \multicolumn{3}{|c|}{\scriptsize{incremental}} \\
					&				& {\scriptsize\qube}	&	{\scriptsize\depqbf} &	 {\scriptsize\qube (fwd)}& {\scriptsize\qube (bwd)}	& {\scriptsize\depqbf} 		 \\
\hline\hline
enc04	&	17	&	3		&	3	&	3		&	2		&	{1}	\\	 				
enc09	&	17	&	7		&	5	&	7		&	4		&	{3}	\\	 				
enc01	&	33	&	31		&17   	&	28		&	24		&	{5}	\\	 
enc03	&	33	&	33		& {16}	&	289		&	28		&	27	\\	 				
enc05	&	33	&	64		&24		&	61		&	46		&	{7}	\\	 				
enc06	&	33	&	29		&26		&	28		&	24		&	{10}	\\	 				
enc07	&	33	&	75		&16		&	76		&	69		&	{5}	\\	 				
enc08	&	33	&	108	        &16		&	110		&	79		&	{5}	\\	 				
enc02	&	65	& 	271	        &{106}	&	TO		&	269		&	175\\	 				
\hline
tlc01		&	132	&	26		&	68		&	133		&	130		&	{17}	\\	 				
tlc03		&	132	&	24		&	160		&	8		&	{8}		&	17	\\	 				
tlc04		&	132	&	769		&	2196		&	1204		&	27		&	{25}	\\	 					
tlc05		&	152	&	1330		&	4201		&	2057		&	38		&	{34}	\\	 				
tlc02		&	258	&	MO		&	TO		&	MO		&	{98}		&	1908 \\	 					
\hline
\end{tabular}
\end{center}
\vspace{-2.1\baselineskip}
\end{wraptable}
To illustrate our approach in the context of QBF solving, we consider the problem of \emph{verifying partial designs}, i.e.,
sequential circuits  where parts of the specification are black-boxed.
In recent work~\cite{DBLP:conf/date/MarinMLB12,DBLP:journals/aicom/MillerMB15},  the question
whether a given safety property can be violated  regardless of the implementation of  a black-box has been translated to QBFs which are  solved incrementally 
by a version of the  QBF solver  \qube~\cite{DBLP:journals/jsat/GiunchigliaMN10}.
Benchmarks 
are available from QBFLIB,\footnote{\url{http://www.qbflib.org}} however
neither the solver used
in~\cite{DBLP:conf/date/MarinMLB12,DBLP:journals/aicom/MillerMB15} nor the
application program used to generate sequences of QBFs are publicly available. 
Marin et al.~\cite{DBLP:conf/date/MarinMLB12} introduced two encoding strategies: forward incremental  and backward incremental reasoning. In a nutshell, the quantifier prefix is always extended to the right in the  former approach, while it is extended to the left in the latter approach. 
Both strategies yield the same sequences of formulas up to renaming~\cite{DBLP:conf/date/MarinMLB12}.
We used the publicly available instances from the forward-incremental encoding without preprocessing to evaluate \depqbf.
Instances from the backward-incremental approach are not publicly available.

Table~\ref{tab:results-qbf} shows the comparison between \qube and \depqbf. 
Runtimes are in seconds, $k$ is the index of the first satisfiable formula, TO and MO refer to a timeout and memout, respectively. 
The maximal runtime of Algorithm~\ref{alg:cumvol} and~\ref{alg:prefix} was 95 seconds.
Runtimes for \qube in Table~\ref{tab:results-qbf} are the ones  reported in~\cite{DBLP:conf/date/MarinMLB12}. 
There, experiments were carried out on an AMD Opteron 252 processor running at 2.6 GHz
with 4GB of main memory and a timeout of 7200 seconds.
Experiments for \depqbf 
were performed on a 2.53 GHz Intel Core 2 Duo processor with 4GB of main memory with
OS X 10.9.5 installed. Thus runtimes are not directly comparable because experiments were carried out on different machines, they
give, however, a rough picture of how
the solvers relate.
Like \qube, \depqbf benefits from the incremental strategy on most instances.  
The backward-incremental strategy is clearly the dominating strategy for \qube.
A quite eye-catching observation is that forward-incremental solving, while
 hardly improving the performance 
of \qube compared to the non-incremental approach, works quite well for \depqbf. 

\section{Conclusion}{\label{sec:concl}}

We presented an approach to automated benchmarking of incremental SAT and QBF
solvers by translating sequences of formulas into API calls of incremental SAT and QBF solvers
executed by a benchmarking program. Several incremental solvers may be tightly
integrated into the benchmarking program by linking them as libraries. 
Thus,  we decouple the generation of formulas by an
application from the solving process which is particularly relevant when application
programs  are not available.
Additionally, we make sequences of formulas which already
exist in public  benchmark collections available for benchmarking
and testing. 
We illustrated our approach to automated benchmarking of incremental SAT and
QBF solvers on instances from hardware verification problems. 
To improve the
performance of incremental QBF solving on these problems, we want to integrate
incremental preprocessing {into} \depqbf. As shown
in~\cite{DBLP:conf/date/MarinMLB12,DBLP:journals/aicom/MillerMB15},
preprocessing potentially improves the performance of incremental workflows
considerably.



\newpage

\begin{appendix}

\section{Correctness of Algorithms~1 and 2}
\begin{theorem}
Algorithm~\ref{alg:cumvol} is totally correct with respect to the precondition that $\sigma = (F_{1}, \ldots, F_{n})$ is a sequence of sets of clauses with $n \geq 2$ and the
postcondition that any $C_{i}$, $1 \leq i \leq n$, contains the cumulative clauses of $F_{i}$ in $\sigma$, and 
any $V_{i}$,  $1 \leq i \leq  n$, contains the volatile clauses of $F_{i}$ in $\sigma$.
\end{theorem}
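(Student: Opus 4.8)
The plan is to prove total correctness in the two standard parts, termination and partial correctness.

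\textbf{Termination} is immediate: every loop in Algorithm~\ref{alg:cumvol} is a \texttt{for}- or \texttt{foreach}-loop over a finite, statically determined set of indices or clauses, and there is no recursion; hence the algorithm halts on every input satisfying the precondition.

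\textbf{Partial correctness} is where the work lies, and I would obtain it from a loop invariant for the main \texttt{for}-loop. Treat the two initialisation lines as ``iteration $1$'' and the $i$-th pass of the loop as ``iteration $i$'' (so $2 \le i \le n-1$), and adopt the convention $F_0 := \emptyset$, so that ``$c$ appears in $F_i$ for the first time'' in the definition of \emph{cumulative} reads ``$c \in F_i$ and $c \notin F_{i-1}$''. The invariant is: immediately after iteration $i$ ($1 \le i \le n-1$) the entries $C_\ell, V_\ell$ with $\ell > i$ are still unset, and for every clause $c$ and every $\ell$ with $1 \le \ell \le i$: (i) $c \in V_\ell$ iff $c \in F_\ell$ and $c \notin F_m$ for some $m$ with $\ell < m \le i+1$; and (ii) $c \in C_\ell$ iff $c \in F_\ell$, $c \notin F_{\ell-1}$, and $c \in F_m$ for all $m$ with $\ell < m \le i+1$. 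I would prove this by induction on $i$. The base case $i=1$ is a direct computation from $V_1 = F_1 \setminus F_2$ and $C_1 = F_1 \cap F_2$. For the step, note that the assignments $V_i \leftarrow F_i \setminus F_{i+1}$ and $C_i \leftarrow (F_i \setminus F_{i-1}) \setminus V_i$ establish (i) and (ii) for $\ell = i$ outright, and that the \texttt{foreach}-block touches only $C_\ell$ and $V_\ell$ with $\ell < i$; so the task reduces to showing that, starting from the invariant after iteration $i-1$ (upper bound $i$ on the range of $m$), the \texttt{foreach}-block restores (i) and (ii) for every $\ell < i$ with the new upper bound $i+1$.

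The crux is a structural description of the \texttt{foreach}-block. Fix a clause $c$ for which it performs a fix-up, \iec $c \in V_i \cap F_{i-1}$, equivalently $c \in F_{i-1} \cap F_i$ and $c \notin F_{i+1}$; let $j^{*}$ be the largest index $\le i$ with $c \in F_{j^{*}}$ and $c \notin F_{j^{*}-1}$ (the most recent ``fresh appearance'' of $c$). Using the induction hypothesis one shows: (a) $c \in F_t$ for all $j^{*} \le t \le i$ (no gaps, by maximality of $j^{*}$) and $1 \le j^{*} \le i-1$; and (b) $j^{*}$ is the \emph{unique} index with $c \in C_{j^{*}}$ at the point the block runs --- existence from (a) and the inductive form of (ii) at $\ell = j^{*}$, uniqueness because two distinct indices satisfying (ii) for the same range contradict each other through the clause ``$c \notin F_{\ell-1}$''. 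Hence the inner \texttt{for}-loop returns exactly $j^{*}$, deletes $c$ from $C_{j^{*}}$, and inserts $c$ into $V_{j^{*}}, \dots, V_{i-1}$. Combining this with the induction hypothesis --- which by (a) gives, at that point, $c \notin V_\ell$ for $j^{*} \le \ell \le i-1$ and $c \notin C_\ell$ for $j^{*} < \ell \le i$ --- a case distinction on whether $\ell < j^{*}$, $j^{*} \le \ell \le i-1$, or $\ell = i$ verifies (i) and (ii) after iteration $i$ for the range $(\ell, i+1]$, the new point $m = i+1$ being handled by $c \notin F_{i+1}$. For clauses not receiving a fix-up and for indices $\ell < j^{*}$, memberships are unchanged, and the equivalences are checked directly, the key point being that a clause which is volatile in $F_\ell$ for the range $(\ell, i+1]$ but not for $(\ell, i]$ must satisfy $c \notin F_{i+1}$ and $c \in F_\ell, \dots, F_i$, and therefore does receive a fix-up with $j^{*} \le \ell \le i-1$. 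After the loop the invariant holds with $i = n-1$, \iec with range $(\ell, n]$, which is precisely the postcondition for $1 \le \ell \le n-1$; and the closing lines $C_n \leftarrow F_n \setminus F_{n-1}$, $V_n \leftarrow \emptyset$ supply it for $\ell = n$, since the range $(n, n]$ is empty --- no clause is volatile in $F_n$, and $c$ is cumulative in $F_n$ iff $c \in F_n \setminus F_{n-1}$.

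The main obstacle is the inductive step restricted to the \texttt{foreach}-block, and within it the structural fact (b): that at iteration $i$ there is a \emph{unique} $j^{*}$ with $c \in C_{j^{*}}$ and that $c$ occurs in $F_{j^{*}}, \dots, F_i$ without gaps, so that the contiguous block $V_{j^{*}}, \dots, V_{i-1}$ into which $c$ is inserted is exactly the set of levels at which $c$ has just become volatile. Everything else is routine reasoning with set operations, and termination is trivial.
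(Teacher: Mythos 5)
Your proposal is correct and follows essentially the same route as the paper's own proof: an induction on the main loop with the invariant that all $C_\ell, V_\ell$ for $\ell \le i$ are correct relative to the prefix $(F_1,\ldots,F_{i+1})$, plus a characterization of which clauses the \textbf{foreach}-block must reclassify (your unique $j^{*}$ with the no-gap property is the same fact as the paper's claim that $V_i \cap F_{i-1}$ is exactly the set of clauses sitting in some $C_j$ that have become volatile). Your write-up is somewhat more explicit than the paper's on two points it leaves implicit --- the uniqueness of the index $j^{*}$ and the fact that levels $\ell < j^{*}$ already have $c \in V_\ell$ --- but the decomposition and key lemma are the same.
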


\begin{proof}
Clearly, Algorithm~\ref{alg:cumvol} terminates on each input.

We show that the condition that any $C_{j}$, $1 \leq j < i$, contains the cumulative clauses of $F_{j}$ in the subsequence $\sigma_{i} = (F_{1}, \ldots, F_{i})$ of $\sigma$, and 
any $V_{j}$,  $1 \leq j < i$, contains the volatile clauses of $F_{j}$ in $\sigma_{i}$ is an invariant of the main loop (at Line 2).
The invariant together with $i = n$ implies the postcondition as $C_{n}$ always contains those clauses that are in $F_{n}$ but not in $F_{n-1}$, and $V_{n}$ always equals the empty set (Line 12).
Likewise, the precondition implies the invariant since after Line 1, $V_ {1}$ contains all clauses of $F_{1}$ that are not in $F_{2}$ and which are thus volatile in $F_{1}$
in the sequence $F_{1}, F_{2}$, and $C_{1}$ contains all clauses which are in $F_{1}$ and $F_{2}$ and which are hence cumulative in $F_{1}$ in the sequence $F_{1}, F_{2}$. 

It remains to show that if the invariant holds for some $i$, $2 \leq i \leq n-1$ at Line 3, then it holds for $i+1$ after executing Lines 3--11.  
After Line 3, $V_{i}$ contains all the clauses that are volatile in $F_{i}$ in $\sigma_{i+1}$.  
Likewise, after Line 4, $C_{i}$ contains all the clauses that are in $F_{i}$ but not in $F_{i-1}$ and which are not volatile in $F_{i}$, that is, which are cumulative in $F_{i}$ in $\sigma_{i+1}$. 
Note that if a clause $c$ is volatile in some $F_{j}$, $j < i$, in $\sigma_{i}$, then $c$ is also volatile in $F_{j}$ in $\sigma_{i+1}$. 
On the other hand, if a clause is cumulative in $F_{j}$, it can be the case that $c$ becomes volatile in $\sigma_{i+1}$ if $c \not\in F_{i+1}$.
Hence, it is possible that clauses that were previously classified as cumulative need to be reclassified. 

We make use of the following claim: After Line 4, a clause $c$ is in 
$V_{i} \cap F_{i-1}$ iff, for some $j < i$, $C_{j}$ contains a clause $c$ that is volatile in $\sigma_{i+1}$. 
This claim is proven as follows:
Assume that for some $j < i$, $C_{j}$ contains a clause $c$ that is volatile in $\sigma_{i+1}$. As the invariant holds for $i$, $c \in F_{k}$, for all
$j \leq k \leq i$ but  $c \not\in F_{i+1}$ and thus $c \in V_{i}$. Clearly, $c \in V_{i} \cap F_{i-1}$.
On the other hand, assume some clause $c$ is in $V_{i} \cap F_{i-1}$. Clearly, $c \in V_{i}$ implies $c \in F_{i}$. Hence, as the invariant holds for $i$, $c \in C_{j}$, for some $j < i$, and, since $c \in V_{i}$, c is volatile in $F_{j}$ in $\sigma_{i+1}$.

By virtue of the above claim, $V_{i} \cap F_{i-1}$ contains precisely those clauses which need to be reclassified as volatile.
After Lines  6 -- 11, for each $c \in V_{i} \cap F_{i-1}$, the first (and only) $C_{j}$ with $c \in C_{j}$ is found, and $c$ is removed from $C_{j}$ and added to 
all $V_{l}$, $j \leq l \leq i-1$. 
Hence, after Lines  3 -- 11, the invariant holds for $i+1$.\qed
\end{proof}

Algorithm~\ref{alg:prefix} works as follows. 
For each quantified set $q$ of $S$, either it has one matching set $q'$ in $R$ (Lines 4 and 5) or it does not have a matching quantified set in $R$ (Lines 7,8, and 9).
In the former case, we need to add the atoms in $q$ to $q'$ if they are not already there (Line 5). In the latter case, we need to add the entire set $q$ to the prefix (Line 9). 
Adding atoms and quantified sets is always done at the right nesting level $m$. We store in $n$ the number of new quantified sets that have been added. At Line 5, when adding atoms to a matching quantified set, $m$ is
the nesting level of the matching quantified set in $R$ plus the number $n$ of previously added unmatched quantified sets. At Line 9, when adding an entire quantified set, $m$ is the nesting level of the quantified set that was modified last plus one. 

\begin{table}[ht!]

\caption{Detailed results for Table~\ref{tab:results-sat}: SAT solvers on hardware verification problems. }
\label{tab:results-sat-detailed}
\begin{center}\small

\begin{tabular}{cc}

{ \small
\begin{tabular}{|l|c|c|c|c|}
\hline
 &   {\scriptsize\minisat}      &       {\scriptsize\picosat} & {\scriptsize\lingeling} 
                                                                &   {\scriptsize\depqbf}      \\ 
\hline
6s393r  & 15 & 15  & 14 &  10 \\ 
6s394r  & 27 & 23  & 22 &  16 \\ 
6s514r  & 17  & 16  & 16 &  14 \\ 
arbi0s08  & 12 &  12  & 13 &  11 \\ 
arbi0s16  & 17  & 17  & 17 &  17 \\ 
arbixs08  & 9  & 10  & 10 &  10 \\ 
cuabq2f  & 28  & 22  & 40 &  19 \\ 
cuabq2mf  & 77  & 46  & 65  & 20 \\ 
cuabq4f  & 21  & 22  & 26 &  15 \\ 
cuabq4mf  & 25  & 23  & 40 &  16 \\ 
cuabq8f  & 22  & 21  & 26 &  19 \\ 
cubak  & 83  & 43  & 55 &  28 \\ 
cufq2  & 101  & 88  & 83 &  21 \\ 
cugbak  & 38  & 32  & 42 &  21 \\ 
cuhanoi10  & 35  & 34  & 35  & 17 \\ 
cujc12  & 47  & 46  & 38 &  15 \\ 
cunim1  & 22  & 21  & 23 &  19 \\ 
cunim2  & 22  & 21  & 22 &  18 \\ 
cuom1  & 14  & 13  & 14 & 10 \\ 
cuom2  & 14  & 13  & 15 &  10 \\ 
\hline 
\multicolumn{5}{c}{BMC problems unrolled by 100 steps.} \\
\end{tabular}
}

&

{
\small
\begin{tabular}{|l|c|c|c|c|}
\hline
    &   {\scriptsize\minisat}      &       {\scriptsize\picosat} & {\scriptsize\lingeling} 
                                                                &   {\scriptsize\depqbf}      \\ 
\hline
cuom3  & 15  & 14  & 16 &  10 \\ 
cupts14  & 34  & 32  & 37 &  25 \\ 
cupts15  & 35  & 32  & 37 &  27 \\ 
cupts16  & 35  & 34  & 37 &  24 \\ 
cutarb16  & 52  & 37  & 48 &  30 \\ 
cutf1  & 44  & 32  & 35 &  21 \\ 
pdtfifo1to0  & 16  & 16  & 16 &  13 \\ 
pdtpmsdc16  & 28  & 19  & 30 &  15 \\ 
\hline
\multicolumn{5}{c}{BMC problems unrolled by 100 steps.} \\

\hline
6s188  & 39  & 34  & 36 &  33 \\ 
6s24  & 25  & 23  & 27 &  20 \\ 
6s270b1  & 51  & 13  & 51 &  11 \\ 
arbi0s32p03  & 32  & 32  & 33 &  32 \\ 
arbixs16p03  & 16  & 16  & 16 &  16 \\ 
bmhan1f1  & 29  & 21  & 29 &  19 \\ 
bobpcihm  & 17  & 15  & 16 &  11 \\ 
bobsmvhd3  & 14  & 11  & 14 &  10 \\ 
cufq1  & 45  & 33  & 37 &  23 \\ 
cujc128  & 7  & 8  & 7 &  7 \\ 
cujc32  & 9  & 10  & 10 &  8 \\ 
\hline
\multicolumn{5}{c}{BMC problems unrolled by 50 steps.} \\

\end{tabular}
}

\end{tabular}

\end{center}
\vspace{-\baselineskip}
\end{table}

\end{appendix}

\end{document}